\let\NAT@parse\undefined
\pgfplotsset{compat=1.18}
\DeclareMathOperator*{\argmin}{arg \ min}
\DeclarePairedDelimiter{\norm}{\lVert}{\rVert}
\DeclarePairedDelimiterXPP{\prob}[1]{\mathbb{P}}[]{}{%
    
    #1
}
\DeclarePairedDelimiter{\stochasticargument}[]
\NewDocumentCommand{\expval}{e{_}}{%
    \mathbb{E}\IfValueT{#1}{_{#1}}\stochasticargument
}
\newcommand{\funcdef}[3]{#1 : #2 \rightarrow #3}
\newtheorem{assumption}{Assumption}
\newtheorem{definition}{Definition}
\newtheorem{theorem}{Theorem}
\definecolor{C0}{RGB}{90,91,159}  % purple
\definecolor{C1}{RGB}{255,111,97}  % orange
\definecolor{C2}{RGB}{0,148,115}  % green
\definecolor{C3}{RGB}{217,79,112}  % red
\definecolor{C4}{RGB}{123,196,196}  % mint
\definecolor{C5}{RGB}{240,192,90}  % yellow
\definecolor{darkslategray}{RGB}{38,38,38}
\definecolor{lavender}{RGB}{234,234,242}
\pgfplotsset{
    every axis/.style={
        axis background/.style={fill=lavender},
        axis line style={white},
        tick align=outside,
        tick pos=left,
        enlarge x limits=0.05,
        enlarge y limits=0.05,
        tick scale binop=\times,
        x grid style={white},
        y grid style={white},
        xmajorgrids,
        ymajorgrids,
        xtick style={color=darkslategray},
        ytick style={color=darkslategray},
    },
    every axis title/.style={anchor=south, at={(0.5,1)}},
    every axis legend/.append style={
        cells={anchor=west},
        fill=lavender,
        draw=white,
    }
}
\def\axisdefaultwidth{\columnwidth} 
\def\axisdefaultheight{130pt}
\newcommand{\mytitle}{Probabilistically safe and efficient model-based reinforcement learning}
\title{\LARGE \bf \mytitle}
\author{Filippo Airaldi, Bart De Schutter, and Azita Dabiri%
\thanks{This research is part of a project that has received funding from the European Research Council (ERC) under the European Union's Horizon 2020 research and innovation programme (Grant agreement No. 101018826 - CLariNet).}%
\thanks{The authors are with the Delft Center for Systems and Control, Delft University of Technology, 2628 CD Delft, The Netherlands {\tt\small \{f.airaldi,b.deschutter,a.dabiri\}@tudelft.nl}}%
}
\begin{document}

\maketitle
\thispagestyle{empty}
\pagestyle{empty}

\begin{abstract}
    This paper proposes tackling safety-critical stochastic Reinforcement Learning (RL) tasks with a sample-based, model-based approach. At the core of the method lies a Model Predictive Control (MPC) scheme that acts as function approximation, providing a model-based predictive control policy. To ensure safety, a probabilistic Control Barrier Function (CBF) is integrated into the MPC controller. To approximate the effects of stochasticies in the optimal control formulation and to fulfil the probabilistic CBF condition, a sample-based approach with guarantees is employed. Furthermore, to counterbalance the additional computational burden due to sampling, a learnable terminal cost formulation is included in the MPC objective. An RL algorithm is deployed to learn both the terminal cost and the CBF constraint. Results from a numerical experiment on a constrained LTI problem corroborate the effectiveness of the proposed methodology in reducing computation time while preserving control performance and safety.
\end{abstract}

\section{INTRODUCTION} \label{sections:introduction}

Reinforcement Learning (RL) has emerged as a successful methodology for solving complex optimal control problems, including when dealing with systems subject to uncertainty and stochastic disturbances \cite{sutton_2018_reinforcement}. However, employing RL in safety-critical scenarios remains in general challenging due to the inherent trial-and-error nature of the learning process and the difficulties in explicitly ensuring constraint satisfaction throughout training, even if probabilistically.  

Control Barrier Functions (CBFs) have gained significant traction as an effective tool for handling safety constraints in control problems \cite{ames_2019_control}. CBFs can enforce forward invariance of a safe set, thus guaranteeing safety conditions over the controlled trajectories, via an energy-based argumentation rather than relying on explicit set computations. Robust and stochastic extensions have also spun off, e.g., \cite{alan_2023_parametrized,clark_2019_control}, which account for uncertainties and/or disturbances affecting the system. At the same time, CBFs have been successfully integrated with various control architectures, including optimisation-based control schemes such as Model Predictive Control (MPC) \cite{zeng_2021_safety,nascimento_2024_probabilistically,sabouni_2024_reinforcement,wang_2024_stochastic}. 
While CBFs offer guarantees on safety, their usage introduces some challenges. One of these lies in properly calibrating the CBF parameters, particularly its class $\mathcal{K}$ function. Poor tuning can severely impact the feasibility of the control problem and its closed-loop performance. Selecting an appropriate class $\mathcal{K}$ function thus involves a non-trivial trade-off between conservativeness (safety) and control performance. In this regard, adaptive formulations have been proposed in the literature that, e.g., employ auxiliary constructions \cite{xiao_2022_adaptive} or leverage intelligent decision-making \cite{sabouni_2024_reinforcement,xiao_2023_barriernet} to adjust the barrier parameters autonomously.

Recently, MPC has been proposed as a promising function approximation strategy for RL algorithms, where the predictive controller acts both as policy provider and value function approximation for the underlying RL task \cite{gros_2020_datadriven,reiter_2025_synthesis}. In contrast to model-free approaches, this method often results in higher sample efficiency, better interpretability and certifiability, since the MPC controller can explicitly incorporate system dynamics and systematically handle constraints. Importantly, variants of the nominal MPC formulation can also address robust and/or stochastic control problems characterised by uncertainties and/or disturbances \cite{mesbah_2016_stochastic,schildbach_2014_scenario,campi_2019_scenario}. Despite its benefits, the application of stochastic MPC, particularly in its sample-based forms, is often computationally demanding. One common approach to mitigate this computational complexity is to shorten the MPC prediction horizon. However, doing so can adversely affect control performance and safety due to the induced myopia of the controller. This issue is commonly addressed by introducing a terminal cost approximation, crafted to appropriately approximate the true (generally unknown) cost-to-go beyond the shortened horizon \cite{karnchanachari_2020_practical,seel_2022_convex,abdufattokhov_2024_learning,he_2024_apprxoimate}. Nonetheless, similar to the class $\mathcal{K}$ functions in CBFs, manually selecting or designing an effective terminal cost approximation introduces another trade-off between computational complexity, safety, and control performance.

In this paper, we propose a novel approach that leverages MPC-based RL combined with probabilistic CBFs and terminal cost approximation to automatically learn from data a model-based policy that ensures probabilistic safety while maintaining computational efficiency. Our methodology integrates probabilistic CBF constraints into the MPC formulation to enforce safety despite stochastic disturbances with arbitrary probability. 
A sample-based approximation is introduced to render the optimisation control problem tractable. This is distinct from, e.g., MPPI \cite{williams_2016_aggressive}, where sampling is exploited to compute a Monte Carlo approximation of the optimal action sequence of the control problem.
% {\color{red}A sample-based approximation is then introduced: contrary to MPPI \cite{williams_2016_aggressive}, where sampling is exploited to compute a Monte Carlo approximation of the optimal action sequence of the control problem, here the sample-based approximation is leveraged to render the optimal control problem tractable.} 
To address the computational complexity introduced by the sample-based approach and additional CBF constraint, we employ a shortened MPC prediction horizon alongside a learnable terminal cost approximation, which is automatically tuned via RL. Furthermore, the class $\mathcal{K}$ function within the CBF is also learned from interaction data, eliminating manual tuning and enabling adaptivity. Algorithm \ref{algo:introduction:algorithm} summarises the proposed approach in a compact scheme.
\begin{algorithm2e}[t]
    % \DontPrintSemicolon
    \caption{Stochastic safe MPC-based RL. See \Cref{sections:methodology} for further details on each step below.}
    \label{algo:introduction:algorithm}
    \KwIn{Initial MPC parameters $\bm\theta^0$, violation prob. $\varepsilon$, full and short horizons $N$, $\overline{N}$, number of training episodes $n_{\max}$;}
    \KwOut{Learnt parametrisation $\bm\theta^{n_{\max}}$;}
    Select number of samples $M$ from $\varepsilon$, $N$, $\overline{N}$ empirically or via conservative bounds \eqref{eq:methodology:num-samples-nonconvex}, \eqref{eq:methodology:num-samples-convex}\;
    Create sampled-based MPC controller \eqref{eq:methodology:scmpc} with learnable parametric CBF constraints (to ensure safety) and terminal cost approximation (to compensate for $\overline{N}$)\;
    \For{$i = 0, \ldots, n_{\max}-1$}{
        Perform MPC closed-loop task with current parametrisation $\bm\theta^i$\;
        Update parametrisation to $\bm\theta^{i+1}$ via RL\;
    }
\end{algorithm2e}

The main contributions of this paper can thus be summarised as follows:
\begin{enumerate}
    \item We introduce a stochastic MPC formulation with integrated probabilistic CBF constraints, explicitly designed to handle stochasticity in safety-critical tasks.
    \item We provide a computationally efficient sample-based approximation of this formulation and propose to leverage RL to automatically learn both the terminal cost approximation and the CBF class $\mathcal{K}$ function.
    \item We provide probabilistic safety guarantees and illustrate the effectiveness and computational advantages of our proposed method on a numerical example.
\end{enumerate}

The remainder of this paper is structured as follows. In \Cref{sections:background}, we review relevant background on safe RL, MPC as function approximation, and terminal cost approximations. \Cref{sections:methodology} describes and analyses the proposed method. Simulation results validating our approach are presented in \Cref{sections:numerical-exp}, followed by conclusions in \Cref{sections:conclusions}.

\textit{Notation}: vector and matrix quantities are in bold. Inequalities on vectors are applied element-wise. Operation $\norm*{\bm{y}}_{\bm{A}}$ indicates $\sqrt{ \bm{y}^\top \bm{A} \bm{y} }$, and $\bm{y} \odot \bm{z}$ the Hadamard product between $\bm{y}$ and $\bm{z}$.

\section{BACKGROUND} \label{sections:background}

\subsection{Safe Reinforcement Learning}

Consider the discrete-time, possibly nonlinear, stochastic system
\begin{equation}  \label{eq:background:dynamics}
    \bm{s}_{t+1} = f\left( \bm{s}_t, \bm{a}_t, \bm\omega_t \right),
\end{equation}
where, at each time step $t \in \mathbb{N}$, $\bm{s}_t \in \mathcal{S} \subseteq \mathbb{R}^{n_s}$ denotes its state, $\bm{a}_t \in \mathcal{A} \subset \mathbb{R}^{n_a}$ the control action, and $\bm\omega_t \in \Omega \subseteq \mathbb{R}^{n_d}$ the disturbance affecting the system. Dynamics $\funcdef{f}{\mathcal{S} \times \mathcal{A} \times \Omega}{\mathcal{S}}$ are assumed to be known and Lipschitz continuous w.r.t. $\bm{s}_t$ and $\bm{a}_t$ with constant $L_f$ over the domain $\mathcal{S} \times \mathcal{A}$. 
\begin{assumption}[Uncertainty]  \label{assumptions:background:iid-samples}
    Sequences $\left\{ \bm\omega_\tau \right\}_{\tau = t}^{t+N} \sim \mathcal{W}$, for $N > 0$, are independent and identically distributed (i.i.d.) random variables with support $\Omega^N$. Further, a sufficient number of i.i.d. samples of these sequences can be drawn from $\mathcal{W}$ or is available (e.g., from historical data). 
\end{assumption}

For a deterministic policy $\funcdef{\pi_{\bm\theta}}{\mathcal{S}}{\mathcal{A}}$, parametrised in $\bm\theta \in \Theta \subseteq \mathbb{R}^{n_{\bm\theta}}$, we define its performance as\footnote{For simplicity, we address in this paper the finite-horizon undiscounted setting, but our results can be easily extended to the infinite-horizon discounted case.} 
\begin{equation}   \label{eq:background:safe-rl:performance}
    J\left( \pi_{\bm\theta} \right) \coloneqq \expval_{\chi_{\pi_{\bm\theta}}}*{
        \sum_{t=0}^T \ell \bigl( \bm{s}_t, \pi_{\bm\theta} \left( \bm{s}_t \right) \bigr)
    },
\end{equation}
where $T \in \mathbb{N}$ is the horizon, $\funcdef{\ell}{\mathcal{S} \times \mathcal{A}}{\mathbb{R}}$ is a stage cost function, and $\chi_{\pi_{\bm\theta}}$ the state distribution the policy induces. In safe RL, we are primarily interested in finding a policy that optimises the performance while providing safe trajectories with high probability, i.e., 
\begin{equation}  \label{eq:background:safe-rl}
    \pi_{\bm\theta}^\star \in \argmin_{ 
         \bm\theta \in \Theta 
    }{ \left\{ 
        J(\pi_{\bm\theta})
            \: : \:
        \prob*{ \bigcap_{t=0}^T \bm{s}_t \in \mathcal{C} } \geq 1 - \varepsilon
    \right\} },
\end{equation}
where $\mathcal{C} = \left\{ \bm{s} \in \mathcal{S} \ | \ h(\bm{s}) \geq 0 \right\}$ denotes the desired safe set, defined by $\funcdef{h}{\mathcal{S}}{\mathbb{R}}$, a Lipschitz continuous function in $\mathcal{S}$ with constant $L_h$, and $\varepsilon \in (0, 1)$ the confidence level for the joint chance constraint. Note that, due to the presence of stochastic disturbances, also the state becomes a random variable and generally cannot be constrained to satisfy $h$ with unit probability without further assumptions (e.g., boundedness of the support of $\bm\omega_t$). 

The familiar notions of state- and action-value functions \cite{sutton_2018_reinforcement} apply here as well: 
\begin{align}
    V_{\bm\theta}(\bm{s}_t) ={}& \expval_{\chi_{\pi_{\bm\theta}}}*{ 
        \sum_{\tau=t}^T \ell \bigl( \bm{s}_\tau, \pi_{\bm\theta} \left( \bm{s}_\tau \right) \bigr)
    },  \\
    Q_{\bm\theta}(\bm{s}_t, \bm{a}_t) ={}& 
        \ell(\bm{s}_t, \bm{a}_t)
        + \expval_{\bm\omega_t}[\big]{ 
            V_{\bm\theta}(\bm{s}_{t+1})
        }.
\end{align}

\subsection{MPC as Function Approximation in RL}

To parametrise the policy $\pi_{\bm\theta}$ and deploy an RL agent, (deep) neural networks are oftentimes the most common choice \cite{arulkumaran_2017_deep}. However, model-free approaches generally suffer from several drawbacks, as discussed in \Cref{sections:introduction}. In this work, a model-based solution to \eqref{eq:background:safe-rl}, which leverages MPC as the function approximation scheme, is instead pursued. 

Given the current state $\bm{s}_t$, consider the MPC scheme
\begin{mini!}
    { \scriptstyle \left\{ \bm{u}_k \right\}_{k=0}^{N-1} }{ 
        \lambda_{\bm\theta}(\bm{x}_0) 
        + \expval*{
            \sum_{k=0}^{N-1} \ell_{\bm\theta}(\bm{x}_k, \bm{u}_k)
            + V^\text{f}_{\bm\theta}(\bm{x}_N)
        }
        \label{eq:background:mpc:obj}
    }{ \label{eq:background:mpc} }{}
    \addConstraint{ 
        \bm{u}_k \in \mathcal{A}, 
        \hspace{64pt} k = 0,\ldots,N - 1, 
        \label{eq:background:mpc:constr:first} 
    }
    \addConstraint{ \bm{x}_0 = \bm{s}_t, }
    \addConstraint{ \bm{x}_{k+1} }{ = f(\bm{x}_k, \bm{u}_k, \bm\omega_k), \; k = 0,\ldots,N - 1,}
    \addConstraint{ 
        \left\{ \bm\omega_k \right\}_{k=0}^{N-1} \sim \mathcal{W}, 
        \nonumber
    }
    \addConstraint{ 
        \prob*{ \bigcap_{k=1}^N \bm{x}_k \in \mathcal{C} } 
        \geq 1 - \varepsilon, 
        \label{eq:background:mpc:constr:last}  
    }
\end{mini!}
where $N \geq 0$ is the prediction horizon, $\funcdef{\ell_{\bm\theta}}{\mathcal{S} \times \mathcal{A}}{\mathbb{R}}$ and $\funcdef{\lambda_{\bm\theta}, V_{\bm\theta}^\text{f}}{\mathcal{S}}{\mathbb{R}}$ the stage, initial, and final cost approximations respectively. This scheme serves as the approximation of the value function as
\begin{equation}  \label{eq:background:V}
    V_{\bm\theta}(\bm{s}_t) = \min_{ 
        \left\{ \bm{u}_k \right\}_{k=0}^{N-1} 
    }{ \left\{ \eqref{eq:background:mpc:obj} 
        \: : \:
        \eqref{eq:background:mpc:constr:first}-\eqref{eq:background:mpc:constr:last}
        \right\}
    }
\end{equation}
and it satisfies the Bellman equations so that 
\begin{gather}  
    Q_{\bm\theta}(\bm{s}_t, \bm{a}_t) = \min_{ 
        \left\{ \bm{u}_k \right\}_{k=0}^{N-1} 
    }{ \left\{ \eqref{eq:background:mpc:obj} 
        \: : \:
        \eqref{eq:background:mpc:constr:first}-\eqref{eq:background:mpc:constr:last},
        \, \bm{u}_0 = \bm{a}_t
        \right\}
    },  \label{eq:background:Q}  \\
    \pi_{\bm\theta}(\bm{s}_t) = \bm{u}_0^\star = \argmin_{ 
        \left\{ \bm{u}_k \right\}_{k=0}^{N-1} 
    }{ \left\{ \eqref{eq:background:mpc:obj} 
        \: : \:
        \eqref{eq:background:mpc:constr:first}-\eqref{eq:background:mpc:constr:last}
        \right\}
    }.
\end{gather}
It was first shown in \cite{gros_2020_datadriven} that the solution to an MPC optimisation problem can approximate the optimal value function. This is especially intuitive if the MPC horizon $N$ were to approach the task horizon $T$ and we would take $\ell_{\bm\theta} = \ell$ and $\lambda_{\bm\theta}, V_{\bm\theta}^\text{f} = 0$. However, in general, long prediction horizons and the stochastic arguments in \eqref{eq:background:mpc} massively hinder the tractability of the MPC problem. In \Cref{sections:methodology}, we present our approach to circumvent both issues in the context of safe RL. While in general, in addition to $V_{\bm\theta}^\text{f}$, it is beneficial to have both $\ell_{\bm\theta}$ and $\lambda_{\bm\theta}$ parametrised to increase the number of degrees of freedom of the approximation scheme \cite{gros_2020_datadriven}, in what follows we propose to only focus on $V_{\bm\theta}^\text{f}$ for computational relief and control performance.

\subsection{Cost-to-go Approximation} \label{sections:background:cost-to-go-approx}

A proper choice of terminal cost $V_{\bm\theta}^\text{f}$ in \eqref{eq:background:mpc:obj} is essential in capturing the cost-to-go for the terminal state $\bm{x}_N$. In general, analytical forms of the true cost-to-go are often unavailable, and approximations must be used instead. As discussed in \Cref{sections:introduction}, the control literature offers various solutions to this challenge. In particular, in this work, we highlight the following approaches from the literature. 

\subsubsection{Nonconvex Case}  \label{sections:background:cost-to-go:psdnn}

for a nonconvex system and stage cost, the cost-to-go approximation can be parametrised as in \cite{abdufattokhov_2024_learning}:
\begin{align}
    \bm{P}_{\bm\theta}(\bm{c}) ={}& L_{\bm\theta}(\bm{c}) L_{\bm\theta}^\top(\bm{c}), \label{eq:backgroun:cholesky-form}  \\
    V_{\bm\theta}^\text{f,psd}(\bm{x}, \bm{c}) ={}& 
    \norm*{ \bm{x} - \bm{x}^\text{f}_{\bm\theta}(\bm{c}) }_{ \bm{P}_{\bm\theta}(\bm{c}) }^2,
\end{align}
where $\bm{c} \in \mathbb{R}^{n_c}$ is the task-relevant context available at the current time step (which can include any information, e.g., state $\bm{x}$, previous actions, references, etc.), and both $\funcdef{\bm{x}^\text{f}_{\bm\theta}}{\mathbb{R}^{n_c}}{\mathbb{R}^{n_s}}$ and $\funcdef{L_{\bm\theta}}{\mathbb{R}^{n_c}}{\mathbb{R}^{n_s \times n_s}}$ are represented by two neural networks (NNs), whose parameters are meant as included in $\bm\theta$. In particular, $L_{\bm\theta}(\bm{c})$ from \eqref{eq:backgroun:cholesky-form} is a lower triangular matrix with only $\frac{1}{2} n_s (n_s + 1)$ free entries. This Cholesky decomposition-like form allows the approximate terminal cost to be positive semidefinite (PSD) w.r.t. $\bm{x}$ by construction. For a fixed $\bm{c}$, this makes optimising over the ensuing quadratic form relatively easy and cheap. At the same time, the quadratic form is context-dependent, meaning its value and gradient information will change from time step to time step, making the approximation also time-dependent. Lastly, the approach is quite malleable as the two NNs can be seamlessly scaled down or up as needed. 

\subsubsection{Convex Case}  \label{sections:background:cost-to-go:pwqnn}

in the case of constrained linear time-invariant systems with quadratic regulation cost, it is well-known that the optimal value function is convex piecewise quadratic (PWQ) \cite{bemporad_2002_explicit}. This result can also be extended to the stochastic setting with zero-mean, time-uncorrelated Gaussian disturbances \cite{lim_1996_separation}. In such a case where the value function is known to have (exactly or even approximately) a convex PWQ shape, \cite{he_2024_apprxoimate} suggests the use of the approximation
\begin{align}
    \bm\varphi(\bm{x}) ={}& \text{ReLU}\left( 
        \bm{W}_{\bm\theta} \bm{x} + \bm{b}_{\bm\theta} 
    \right),  \\
    V_{\bm\theta}^\text{f,pwq}(\bm{x}) ={}& 
    \bm{w}_{\bm\theta}^\top 
    \bigl( \bm\varphi(\bm{x}) \odot \bm\varphi(\bm{x}) \bigr),
\end{align}
where $\bm{W}_{\bm\theta} \in \mathbb{R}^{m \times n_s}$, $\bm{b}_{\bm\theta} \in \mathbb{R}_{<0}^m$ and $\bm{w}_{\bm\theta} \in \mathbb{R}_{\geq 0}^m$ are the adjustable weights and biases of the NN, and $\bm\varphi \in \mathbb{R}^m$ its hidden features. By enforcing $\bm{b}_{\bm\theta} < 0$ and $\bm{w}_{\bm\theta} \geq 0$, it is shown in \cite{he_2024_apprxoimate} that this function is PWQ and convex w.r.t. $\bm{x}$. The advantage of this approximation lies in its scalability (by appropriately selecting the hidden size $m$) and ability to represent any PWQ convex functions by construction.

\section{METHODOLOGY} \label{sections:methodology}

This section introduces a sample-based approximation to the stochastic MPC problem. Similarly to, e.g., \cite{abdufattokhov_2024_learning,he_2024_apprxoimate}, we propose to employ a learnable terminal cost formulation, coupled with a short prediction horizon, to mitigate the computational complexity induced by the sampling approach. At the same time, to preserve the probabilistic safety of the closed-loop state trajectories despite the increased myopia of the controller, we leverage the notion of CBF to guarantee step-wise forward invariance of the safe set with high probability. We adopt RL to perform training of both the terminal cost and the CBF class $\mathcal{K}$ function in an end-to-end fashion. 

\subsection{Probabilistic Control Barrier Function}

In this work, we propose to leverage the CBF framework to guarantee safety. Again, it is essential to remark that, due to the stochasticity affecting the system, in general safety cannot be guaranteed with unit probability. Rather, we will take a probabilistic approach.

\begin{definition}[$N$-Step $\varepsilon$-Control Invariant Set \cite{gao_2021_computing}]
    A set $\mathcal{Q} \subseteq \mathbb{R}^{n_s}$ is $N$-step $\varepsilon$-control invariant for system \eqref{eq:background:dynamics} if, for any $\bm{s}_t \in \mathcal{Q}$, there exists a control policy such that
    \begin{equation}
        \prob*{ 
            \bigcap_{\tau=1}^N \bm{s}_{t+\tau} \in \mathcal{Q} 
        } \geq 1 - \varepsilon.
    \end{equation}
\end{definition}

\begin{definition}[Probabilistic Control Barrier Function \cite{wang_2024_stochastic}]
    For system \eqref{eq:background:dynamics} and safe set $\mathcal{C} \subseteq \mathcal{S}$, the continuous function $\funcdef{h}{\mathcal{S}}{\mathbb{R}}$ is a discrete-time probabilistic CBF if there exist a class $\mathcal{K}$ function $\funcdef{\alpha}{[0,a)}{[0,\infty)}$, $\alpha(y) \leq y$, $\forall y \geq 0$, and a control action $\bm{a}_t \in \mathcal{A}$ such that, with $\xi \in [0, 1)$, it holds that
    \begin{equation}  \label{eq:methodology:probabilistic-cbf}
        \prob*{
            h(\bm{s}_{t+1}) - h(\bm{s}_t) \geq - \alpha \bigl( h(\bm{s}_t) \bigr)
        } \geq 1 - \xi,
        \quad \forall t \in \mathbb{N}.
    \end{equation}
\end{definition}
Note that the above follows straightforwardly from the standard CBF definition, on top of which the probability operator $\mathbb{P}$ has been applied since the state is now a random variable. Now, we can state a result on the step-wise probabilistic invariance guarantee for the set $\mathcal{C}$ thanks to the CBF condition.

\begin{theorem}
    Given a safe set $\mathcal{C} \subseteq \mathcal{S}$ defined by the continuous function $\funcdef{h}{\mathcal{S}}{\mathbb{R}}$ and current state $\bm{s}_t \in \mathcal{C}$, if $h$ is a discrete-time probabilistic CBF, any control policy satisfying \eqref{eq:methodology:probabilistic-cbf} with $\xi \leq \frac{\varepsilon}{N}$ renders the set $\mathcal{C}$ $N$-step $\varepsilon$-control invariant.
\end{theorem}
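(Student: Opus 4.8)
The plan is to convert the per-step probabilistic CBF condition \eqref{eq:methodology:probabilistic-cbf} into the joint $N$-step guarantee via a first-exit-time decomposition followed by a union bound. First I would record a purely deterministic one-step fact: whenever $\bm{s}_{t+k} \in \mathcal{C}$, i.e. $h(\bm{s}_{t+k}) \geq 0$, and the inequality event inside \eqref{eq:methodology:probabilistic-cbf} holds, then $h(\bm{s}_{t+k+1}) \geq h(\bm{s}_{t+k}) - \alpha\bigl(h(\bm{s}_{t+k})\bigr)$. Since $\alpha$ is class $\mathcal{K}$ with $\alpha(y) \leq y$ for all $y \geq 0$, the right-hand side lies in $[0, h(\bm{s}_{t+k})]$ and is in particular nonnegative, so $\bm{s}_{t+k+1} \in \mathcal{C}$. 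Consequently, a transition that leaves $\mathcal{C}$ starting from inside $\mathcal{C}$ can only occur on the complement of the CBF event at that step.

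Next I would introduce, for $k = 0, \ldots, N-1$, the exit-at-step-$k$ events $B_k = \{\bm{s}_{t+k} \in \mathcal{C}\} \cap \{\bm{s}_{t+k+1} \notin \mathcal{C}\}$. By the deterministic fact above, $B_k$ is disjoint from the CBF event at step $t+k$, so the CBF inequality is violated on $B_k$. The structural heart of the argument is the containment $\bigcup_{\tau=1}^N \{\bm{s}_{t+\tau} \notin \mathcal{C}\} \subseteq \bigcup_{k=0}^{N-1} B_k$: on any trajectory that leaves $\mathcal{C}$ within the horizon, let $\tau^\star$ be the first exit time; then $\bm{s}_{t+\tau^\star-1} \in \mathcal{C}$ (using the hypothesis $\bm{s}_t \in \mathcal{C}$ when $\tau^\star = 1$, and the minimality of $\tau^\star$ otherwise) while $\bm{s}_{t+\tau^\star} \notin \mathcal{C}$, which is exactly $B_{\tau^\star-1}$.

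I would then bound each term by conditioning. On the event $\{\bm{s}_{t+k} \in \mathcal{C}\}$ the policy enforces \eqref{eq:methodology:probabilistic-cbf}, so $\mathbb{P}(\bm{s}_{t+k+1} \notin \mathcal{C} \mid \bm{s}_{t+k}) \leq \xi$; integrating this conditional bound over $\{\bm{s}_{t+k} \in \mathcal{C}\}$ gives $\mathbb{P}(B_k) \leq \xi\,\mathbb{P}(\bm{s}_{t+k} \in \mathcal{C}) \leq \xi$. A union bound over the containment above then yields $\mathbb{P}\bigl(\bigcup_{\tau=1}^N \bm{s}_{t+\tau} \notin \mathcal{C}\bigr) \leq \sum_{k=0}^{N-1} \mathbb{P}(B_k) \leq N\xi \leq \varepsilon$, where the last step uses $\xi \leq \varepsilon/N$. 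Passing to the complement gives $\mathbb{P}\bigl(\bigcap_{\tau=1}^N \bm{s}_{t+\tau} \in \mathcal{C}\bigr) \geq 1 - \varepsilon$, which is precisely the definition of $N$-step $\varepsilon$-control invariance for $\mathcal{C}$.

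The main obstacle is the conditioning mismatch: \eqref{eq:methodology:probabilistic-cbf} is a per-step, state-conditional statement, whereas the target is a single joint probability over a random trajectory. A naive union bound over the marginal events $\{\bm{s}_{t+\tau} \notin \mathcal{C}\}$ does not work, because the CBF inequality only controls one-step transitions that originate \emph{inside} $\mathcal{C}$ and says nothing about the marginal chance of being outside $\mathcal{C}$ at a given step. The first-exit decomposition is exactly what restricts attention to transitions from within $\mathcal{C}$, where the CBF bound is applicable, and the tower property is what upgrades the conditional per-step guarantee into the unconditional estimate $\mathbb{P}(B_k) \leq \xi$.
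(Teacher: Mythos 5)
Your proof is correct, and it is in fact more careful than the paper's own argument, which shares your high-level template (pass to the complement, allocate risk $\varepsilon/N$ per step, union bound) but applies it to a different decomposition. The paper union-bounds over the marginal events $\{\bm{s}_{t+\tau} \notin \mathcal{C}\}$, $\tau = 1,\ldots,N$, and then asserts per step that $\mathbb{P}\bigl( h(\bm{s}_{t+\tau+1}) \geq 0 \bigr) \geq \mathbb{P}\bigl( h(\bm{s}_{t+\tau+1}) \geq h(\bm{s}_{t+\tau}) - \alpha\bigl( h(\bm{s}_{t+\tau}) \bigr) \bigr) \geq 1 - \xi$; the first inequality rests on an event containment that is valid only on $\{h(\bm{s}_{t+\tau}) \geq 0\}$, i.e., the paper invokes ``$\bm{s}_t \in \mathcal{C} \Rightarrow h(\bm{s}_t) \geq 0$'' even though at intermediate steps $\bm{s}_{t+\tau} \in \mathcal{C}$ is itself a random event rather than a hypothesis --- precisely the conditioning mismatch you call out. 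Your first-exit decomposition into $B_k = \{\bm{s}_{t+k} \in \mathcal{C}\} \cap \{\bm{s}_{t+k+1} \notin \mathcal{C}\}$ repairs this: it confines the use of the CBF inequality to transitions originating inside $\mathcal{C}$, where the deterministic implication (CBF event plus $\alpha(y) \leq y$ preserves $h \geq 0$) is sound, and the tower property upgrades the state-conditional guarantee \eqref{eq:methodology:probabilistic-cbf} to the unconditional bound $\mathbb{P}(B_k) \leq \xi$. In terms of what each route buys: the paper's version is shorter and conveys the risk-allocation intuition, but is rigorous only under the implicit reading that the predecessor state lies in $\mathcal{C}$ almost surely; yours closes that gap at no cost in the final estimate, since the containment $\bigcup_{\tau=1}^N \{\bm{s}_{t+\tau} \notin \mathcal{C}\} \subseteq \bigcup_{k=0}^{N-1} B_k$ still yields $N\xi \leq \varepsilon$ in a single union bound --- whereas patching the paper's marginal bounds instead (e.g., via $\mathbb{P}(\bm{s}_{t+\tau} \notin \mathcal{C}) \leq \tau\xi$ through the exits before step $\tau$ and then summing) would degrade the total to order $N^2\xi$.
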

\begin{proof}
    The proof is similar to that of \cite[Theorem~2]{wang_2024_stochastic}. By complement, the joint safety condition along an $N$-step trajectory is satisfied as long as 
    \begin{equation}
        \prob*{ 
            \bigcup_{\tau=1}^N \bm{s}_{t+\tau} \notin \mathcal{C}
        } \leq \varepsilon.
    \end{equation}
    Applying the union bound, we get that
    \begin{equation}
        \prob*{ 
            \bigcup_{\tau=1}^N \bm{s}_{t+\tau} \notin \mathcal{C}
        } \leq
        \sum_{\tau=1}^N \prob*{ \bm{s}_{t+\tau} \notin \mathcal{C} }.
    \end{equation}
    To ensure the joint probability of violation is at most $\varepsilon$, it is thus sufficient to require $\sum_{\tau=1}^N \prob*{ \bm{s}_{t+\tau} \notin \mathcal{C} } \leq \varepsilon$. The simplest choice is to allocate the risk uniformly per time step, i.e., we need to satisfy
    \begin{equation}
        \prob*{ 
            \bm{s}_{t+\tau} \in \mathcal{C}
        } \geq 1 - \frac{\varepsilon}{N}, 
        \ \tau=1,\ldots,N.
    \end{equation}
    To achieve this, we select $\xi \leq \frac{\varepsilon}{N}$ and compute the action at each time step $t+\tau$ according to \eqref{eq:methodology:probabilistic-cbf}, so that 
    \begin{multline}
        \prob*{ h(\bm{s}_{t+\tau+1}) \geq 0}
        \\ 
        \geq \prob*{
            h(\bm{s}_{t+\tau+1}) \geq h(\bm{s}_{t+\tau}) - \alpha \bigl( h(\bm{s}_{t+\tau}) \bigr)
        } 
        \\ \geq 1 - \xi 
        \geq 1 - \frac{\varepsilon}{N},
        \ \tau=1,\ldots,N - 1,
    \end{multline}
    where the first inequality leverages the fact that $\bm{s}_t \in \mathcal{C} \Rightarrow h(\bm{s}_t) \geq 0$, and the property $\alpha(y) \leq y$, $\forall y \geq 0$.
\end{proof}
This result implies that, if the control policy acts accordingly to \eqref{eq:methodology:probabilistic-cbf} with $\xi$ properly selected as $\frac{\varepsilon}{N}$, the state trajectory can occasionally leave the safe set $\mathcal{C}$ but the chance of doing so is bounded by $\varepsilon$. Note that, while leveraging the CBF condition is beneficial to safety, there are still some open issues. In particular, finding a control input directly via \eqref{eq:methodology:probabilistic-cbf} is in general challenging due to the probability operator: a distributional characterisation of its argument may be challenging due to the possible nonlinear nature of $f$, $h$, and/or $\alpha$, and would also require exact knowledge of the distribution $\mathcal{W}$. Additionally, it is well-known that properties of the ensuing control policy (such as performance) are dependent on the selection of a proper class $\mathcal{K}$ function. 

\subsection{Sample-based MPC Approximation}

We can now introduce the proposed sample-based approximation of \eqref{eq:background:mpc}. Let us introduce a shortened horizon $\overline{N} \ll N$. At time step $t$, assume $M$ samples $\{ \bm\omega_\tau^{(i)} \}_{\tau=t}^{t + \overline{N} - 1}$, $i=1,\ldots,M$, are available (see \Cref{assumptions:background:iid-samples}).\footnote{Since disturbances could be time-correlated, these samples must be drawn by sampling whole sequences from $\mathcal{W}$ and then considering only the first $\overline{N}-1$ elements.} Then, we can replace the original intractable formulation \eqref{eq:background:mpc} with the following scheme:
\begin{mini!}
    { \scriptstyle \left\{ \bm{u}_k \right\}_{k=0}^{\overline{N}-1} }{ 
        \lambda_{\bm\theta}\left(\bm{s}_t\right)
        \nonumber
    }{ \label{eq:methodology:scmpc} }{}
    \breakObjective{
        + \frac{1}{M} \sum_{i=1}^M \Biggl[ 
            \sum_{k=0}^{\overline{N}-1} \ell_{\bm\theta}\bigl(\bm{x}_k^{(i)}, \bm{u}_k\bigr)
            + V^\text{f}_{\bm\theta}\bigl(\bm{x}_{\overline{N}}^{(i)}\bigr)
        \Biggr]
    }
    \addConstraint{ 
        \bm{u}_k \in \mathcal{A}, 
        \hspace{59pt} k=0,\ldots,\overline{N}-1, 
    }
    \addConstraint{ 
        \bm{x}_0^{(i)} = \bm{s}_t, 
        \hspace{52pt} i=1,\ldots,M, 
    }
    \addConstraint{ 
        \bm{x}_{k+1}^{(i)} = f\bigl(\bm{x}_k^{(i)}, \bm{u}_k, \bm\omega_k^{(i)}\bigr), 
        \nonumber
    }
    \addConstraint{ \hspace{33pt} i=1,\ldots,M, \ k=0,\ldots,\overline{N}-1, }
    \addConstraint{ 
        h\bigl(\bm{x}_{k+1}^{(i)}\bigr) - h\bigl(\bm{x}_k^{(i)}\bigr) \geq \zeta - \alpha_{\bm\theta}\Bigl( h\bigl(\bm{x}_k^{(i)}\bigr) \Bigr), 
        \nonumber 
    }
    \addConstraint{ \hspace{33pt} i=1,\ldots,M, \ k=0,\ldots,\overline{N}-1. \label{eq:methodology:scmpc:constr:safety} }
\end{mini!}
Major differences lie in the safety condition \eqref{eq:background:mpc:constr:last} being replaced by the proposed probabilistic CBF formulation \eqref{eq:methodology:scmpc:constr:safety}, and the probabilistic operators, e.g., the expectation in \eqref{eq:background:mpc:obj}, by sample approximation. By selecting a (much) shorter horizon, we are able to counterbalance the increased size of the optimisation problem. However, myopic policies tend to be less safe. For this reason, we leverage the CBF to ensure control invariance. Still, to avoid jeopardising safety due to the sample-based approximation, we stress that the number of samples $M$ must be selected in such a way to guarantee that the CBF condition is satisfied with probability $1-\frac{\varepsilon}{N}$. In what follows, we discuss how to select $M$ in order to achieve this, thus preserving safety with confidence $\varepsilon$. Note that $\zeta \geq 0$ in \eqref{eq:methodology:scmpc:constr:safety} is a (usually small) scalar required to ensure the probabilistic guarantees discussed below. It is trivial to check that, being nonnegative, its presence does not jeopardise the CBF validity. If the problem \eqref{eq:methodology:scmpc} is convex, $\zeta$ can be freely set to zero; for the generic nonconvex case, $\zeta \neq 0$ (see proof of \Cref{theorems:safety}).

\begin{assumption}[Recursive Feasibility]  \label{assumptions:background:recursive-feas}
    Under the ensuing control policy, the sampled-based MPC scheme \eqref{eq:methodology:scmpc} admits a feasible solution at every time step $t \in \mathbb{N}$ almost surely.
\end{assumption}

This assumption is a requirement for the following result, and is standard in other works, e.g., \cite{schildbach_2014_scenario,campi_2019_scenario}. At first, it might appear restrictive but in practice hard constraints are often replaced by soft constraints in stochastic/learning settings. This choice is corroborated by the probabilistic nature of the control problem, i.e., violations cannot be avoided with unit probability (without further assumptions). Furthermore, this choice is also helpful in the context of RL: during learning, it is beneficial for the RL agent to violate constraints occasionally and receive appropriate penalties so as to learn better to discern safe and unsafe behaviours \cite{gros_2020_datadriven}.

\begin{theorem}  \label{theorems:safety}
    Given a confidence parameter $\beta \in (0,1)$, there exists a minimum number of samples $M$ for which the solution $\left\{ \bm{u}_k^\star \right\}_{k=0}^{\overline{N}-1}$ to the sample-based optimisation problem \eqref{eq:methodology:scmpc} satisfies
    \begin{multline}  \label{eq:theorem:safety}
        \prob*{
            \bigcap_{k=0}^{\overline{N}-1} 
            h\left(\bm{x}_{k+1}^\star\right) - h\left(\bm{x}_k^\star\right) \geq \zeta - \alpha\bigl( h\left(\bm{x}_k^\star\right) \bigr)
        } 
        \ge 1 - \frac{\varepsilon}{N}
    \end{multline}
    with probability no smaller than $1 - \beta$, where $\bm{x}_0^\star=\bm{s}_t$ and $\bm{x}_{k+1}^\star = f\bigl(\bm{x}_k^\star, \bm{u}_k^\star, \bm\omega_k\bigr)$.
\end{theorem}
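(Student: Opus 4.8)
The plan is to read \eqref{eq:methodology:scmpc} as a scenario program in the stacked decision variable $\bm{u} \coloneqq \{\bm{u}_k\}_{k=0}^{\overline{N}-1}$, in which each sampled disturbance sequence $\bm\omega^{(i)}$ contributes, through the dynamics, a \emph{single} scenario constraint obtained by collapsing the per-step conditions \eqref{eq:methodology:scmpc:constr:safety} along the horizon via
\[
    g(\bm{u}, \bm\omega) \coloneqq \max_{0 \le k \le \overline{N}-1} \Bigl[ \zeta - \alpha_{\bm\theta}\bigl( h(\bm{x}_k) \bigr) - h(\bm{x}_{k+1}) + h(\bm{x}_k) \Bigr],
\]
where $\bm{x}_k$ is the trajectory rolled out from $\bm{s}_t$ under $\bm{u}$ and $\bm\omega$. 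Then the per-scenario constraint is $g(\bm{u}, \bm\omega^{(i)}) \le 0$, and the event inside \eqref{eq:theorem:safety} for a fresh sequence $\bm\omega$ is precisely $\{ g(\bm{u}^\star, \bm\omega) \le 0 \}$. Writing the violation probability $V(\bm{u}^\star) \coloneqq \prob*{ g(\bm{u}^\star, \bm\omega) > 0 }$, the target \eqref{eq:theorem:safety} becomes $V(\bm{u}^\star) \le \varepsilon/N$, and the theorem asks that this hold with confidence $1-\beta$ over the draw of the $M$ sequences. I would then split the argument along the convex/nonconvex dichotomy flagged after \eqref{eq:methodology:scmpc:constr:safety}.

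In the convex case I set $\zeta = 0$ and invoke classical scenario optimisation \cite{campi_2019_scenario}: with $d = n_a \overline{N}$ decision variables and $M$ i.i.d. scenarios, the scenario optimiser obeys the a-priori bound $\prob*{ V(\bm{u}^\star) > \eta } \le \sum_{j=0}^{d-1} \binom{M}{j} \eta^j (1-\eta)^{M-j}$. Setting $\eta = \varepsilon/N$ and forcing this binomial tail below $\beta$ determines the smallest admissible $M$, giving \eqref{eq:methodology:num-samples-convex}. The only items to verify are that Assumption \ref{assumptions:background:recursive-feas} guarantees the minimiser exists, and that linearity of $f$ and $h$ together with concavity of the learned $\alpha_{\bm\theta}$ renders $\bm{u} \mapsto g(\bm{u}, \bm\omega)$ convex (a pointwise maximum of convex functions), so that the hypotheses of the scenario theorem are met.

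In the nonconvex case the scenario bound is unavailable, and here the margin $\zeta > 0$ plays the role of a robustness buffer in a Lipschitz-plus-covering argument. I would first propagate the regularity constants: since $f$ is $L_f$-Lipschitz in $(\bm{s},\bm{a})$ and $h$ is $L_h$-Lipschitz, perturbing the sequence $\bm\omega$ by $\delta$ in norm perturbs each rolled-out $\bm{x}_k$, and hence $g(\bm{u}^\star, \cdot)$, by at most $L_g\,\delta$, where $L_g$ collects $L_h$, the Lipschitz modulus of $\alpha_{\bm\theta}$, and a geometric-in-$L_f$ factor from the $\overline{N}$-step rollout. Consequently, any fresh $\bm\omega$ lying within distance $\zeta/L_g$ of a sample $\bm\omega^{(i)}$ still satisfies $g(\bm{u}^\star, \bm\omega) \le 0$, because the margined constraint holds on the samples. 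The violation event is therefore contained in the event that $\bm\omega$ escapes the union of the $\zeta/L_g$-balls centred at the $M$ samples; bounding the probability mass of this uncovered region by a standard covering/occupancy estimate over the compact support $\Omega^{\overline{N}}$, and requiring it to exceed $\varepsilon/N$ with probability at most $\beta$, yields the minimum $M$ in \eqref{eq:methodology:num-samples-nonconvex}.

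The step I expect to be the main obstacle is the nonconvex covering bound, specifically keeping it non-vacuous: the compounded modulus $L_g$ grows geometrically in $\overline{N}$, shrinking the effective radius $\zeta/L_g$, while the occupancy estimate over $\Omega^{\overline{N}}$ inflates with the covering number of the support and hence exponentially in $n_d \overline{N}$. Reconciling this tension, and pinning down exactly the compactness of $\Omega$ and the regularity of $\alpha_{\bm\theta}$ needed for a finite covering number, is the delicate part; by contrast, once the scenario reformulation via $g$ is in place, the convex case reduces to a direct application of the cited result.
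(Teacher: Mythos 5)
Your convex branch is essentially the paper's own argument: the paper also reads \eqref{eq:methodology:scmpc} as a scenario program in $\overline{N} n_a$ decision variables, sets $\zeta = 0$, applies the exact binomial tail bound of \cite[Theorem~1]{campi_2008_exact} with violation level $\xi = \varepsilon/N$, and inverts it to obtain \eqref{eq:methodology:num-samples-convex}; your added verification of convexity of $\bm{u} \mapsto g(\bm{u},\bm\omega)$ and reliance on \Cref{assumptions:background:recursive-feas} are consistent with that.

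The nonconvex branch, however, has a genuine gap, and it is exactly at the point you flag as ``delicate.'' You cover the \emph{disturbance} space: any fresh $\bm\omega$ within $\zeta/L_g$ of a sample is claimed safe, so violations live in the complement of the union of $M$ balls in $\Omega^{\overline{N}}$. This fails on three counts. First, it requires $\Omega$ to be compact, which the paper never assumes (\Cref{assumptions:background:iid-samples} allows arbitrary support, and the numerical experiment uses Gaussian disturbances), so the covering number of $\Omega^{\overline{N}}$ is infinite in precisely the regime the theorem targets. Second, even granting compactness, bounding the $\mathcal{W}$-mass of the uncovered region distribution-free forces the $M$ random balls of radius $\zeta/L_g$ to capture all but $\varepsilon/N$ of the mass, which requires $M$ of order $(L_g/\zeta)^{n_d \overline{N}}$ --- exponential in the disturbance dimension. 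The tension you hoped to ``reconcile'' is therefore not a technicality: it is fatal to recovering \eqref{eq:methodology:num-samples-nonconvex}, whose dependence on $\overline{N} n_a$ is only linear (with a logarithmic per-dimension cost). Third, the margin is double-counted: the sampled constraints \eqref{eq:methodology:scmpc:constr:safety} and the target event \eqref{eq:theorem:safety} contain the \emph{same} $\zeta$, so with your $g$ (which absorbs $\zeta$) the samples give only $g(\bm{u}^\star, \bm\omega^{(i)}) \le 0$, and Lipschitz transfer to a nearby $\bm\omega$ yields $g(\bm{u}^\star,\bm\omega) \le \zeta$, i.e.\ the unmargined condition $h(\bm{x}_{k+1}) - h(\bm{x}_k) \ge -\alpha\bigl(h(\bm{x}_k)\bigr)$, not the event in \eqref{eq:theorem:safety}.

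The paper routes the covering through the \emph{decision} space instead. It observes that $\alpha(y) \le y$ together with strict monotonicity makes $\alpha$ Lipschitz with constant $1$, so the CBF constraint is Lipschitz in the inputs with constant at most $L_\text{CBF} = L_h L_f + L_h + L_h$, and then invokes \cite[Theorem~10]{luedtke_2008_sample}: grid the compact set $\mathcal{A}^{\overline{N}}$ (diameter $d_\mathcal{A}$, dimension $\overline{N} n_a$) at resolution proportional to $\zeta/L_\text{CBF}$, apply Hoeffding-type concentration uniformly over the finitely many grid points, and use $\zeta$ as the slack that extends feasibility from the grid to all of $\mathcal{A}^{\overline{N}}$. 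This gives
\begin{equation*}
    \prob*{ V_{\bm{u}^\star} > \xi }
    \leq
    \left\lceil \frac{2}{\xi} \right\rceil
    \left\lceil \frac{2 d_\mathcal{A} L_\text{CBF}}{\zeta} \right\rceil^{\overline{N} n_a}
    e^{-\frac{1}{2} M \xi^2},
\end{equation*}
and hence \eqref{eq:methodology:num-samples-nonconvex}, with no compactness of $\Omega$ and no knowledge of $\mathcal{W}$: the randomness is absorbed by concentration, not by geometric coverage of the disturbance support. That reversal --- spending the buffer $\zeta$ on uniformity over the compact, known action set rather than over the possibly unbounded disturbance set --- is the idea your sketch is missing.
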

\begin{proof}
    For sake of brevity, for any feasible solution to \eqref{eq:methodology:scmpc} we drop the implicit dependency of $\left\{ \bm{x}_k \right\}_{k=0}^{\overline{N}}$ on $\left\{ \bm{u}_k \right\}_{k=0}^{\overline{N}-1}$. Given $\bm{s}_t$, define the violation probability of a solution as
    \begin{multline}
        V_{\bm{u}} = \prob*{
            \bigcup_{k=0}^{\overline{N}-1} 
            h\left(\bm{x}_{k+1}\right) - h\left(\bm{x}_k\right) < \zeta - \alpha\bigl( h\left(\bm{x}_k\right) \bigr)
        }.
    \end{multline}
    Take $\xi = \frac{\varepsilon}{N}$. Analogously to \Cref{sections:background:cost-to-go-approx}, we distinguish between two cases.

    In case \eqref{eq:methodology:scmpc} is nonconvex, let $d_\mathcal{A} = \sup_{\bm{a},\bm{a}' \in \mathcal{A}} \norm{ \bm{a} - \bm{a}' }_\infty$ be the diameter of $\mathcal{A}$. Because $\alpha(y) \leq y$, $\forall y \geq 0$, and $\alpha$ is strictly increasing, $\alpha$ is Lipschitz continuous with constant 1. Given the Lipschitz constants of $f$ and $h$, the CBF constraint \eqref{eq:methodology:scmpc:constr:safety} is also Lipschitz continuous with constant at most $L_\text{CBF} = L_h L_f + L_h + L_h$. By \cite[Theorem~10]{luedtke_2008_sample} we have
    \begin{equation}
        \prob*{
            V_{\bm{u}^\star} > \xi
        }
        \leq 
        \left\lceil \frac{2}{\xi} \right\rceil
        \left\lceil \frac{ 2 d_\mathcal{A} L_\text{CBF} }{\zeta} \right\rceil^{\overline{N} n_a} 
        e^{ -\frac{1}{2} M \xi^2 }.
    \end{equation}
    By requiring that the right-hand side be $\leq \beta$, we get
    \begin{equation}  \label{eq:methodology:num-samples-nonconvex}
        M \geq \frac{2}{\xi^2} \left( 
            \ln{\beta^{-1}} 
            + \overline{N} n_a \ln{ \left\lceil \frac{ 2 d_\mathcal{A} L_\text{CBF} }{\zeta} \right\rceil }
            + \ln{ \left\lceil \frac{2}{\xi} \right\rceil }
        \right).
    \end{equation}
    
    Let us tackle the special case in which \eqref{eq:methodology:scmpc} is convex w.r.t. its decision variables (it needs not be convex w.r.t. the disturbance). Contrarily to the previous case, here we can consider $\zeta=0$ as it is not required. The scenario approach theory \cite{schildbach_2014_scenario,campi_2019_scenario,campi_2008_exact} shows that the probability of violation at the optimal solution of \eqref{eq:methodology:scmpc} is (possibly tightly) bounded by \cite[Theorem~1]{campi_2008_exact}
    \begin{equation}
        \prob*{
            V_{\bm{u}^\star} > \xi
        }
        \leq \sum_{j=0}^{\overline{N} n_a - 1} 
        \begin{pmatrix} M \\ j \end{pmatrix} 
        \xi^j \left( 1 - \xi \right)^{M - j}.
    \end{equation}
    By requiring that the right-hand side be $\leq \beta$, we obtain
    \begin{equation}  \label{eq:methodology:num-samples-convex}
        M \geq \frac{2}{\xi} \left( \ln{\beta^{-1}} + \overline{N} n_a \right).
    \end{equation}
\end{proof}

Despite of arguably limited applicability, this theorem importantly confirms the intuition that, as the sample size $M$ increases, the confidence at which the safety condition is satisfied increases. Also, note that, while the horizon $\overline{N}$ has been shrunk to combat the computational complexity due to the sampling scheme, the probabilistic safety condition has been left untouched and is still imposed over the original $N$-step trajectory (see right-hand side of \eqref{eq:theorem:safety} where the risk of violation is allocated over $N$ steps instead of $\overline{N}$).

\subsection{RL Algorithm}

Note that most of the major components in \eqref{eq:methodology:scmpc} are parametrised in $\bm\theta$, including the class $\mathcal{K}$ function $\alpha_{\bm\theta}$ and the terminal cost function $V^\text{f}_{\bm\theta}$. We propose to adjust this parametrisation in closed loop via an MPC-based RL algorithm \cite{gros_2020_datadriven}. This approach solves the original safe RL problem \eqref{eq:background:safe-rl} as the safety constraint is taken into account into the MPC function approximation while the performance cost \eqref{eq:background:safe-rl:performance} is minimised by a gradient-based RL method. Among the advantages of this approach is the fact that it bypasses the need to manually craft and select the parametrised components, which are instead adjusted by RL via interactions with the environment. This encompasses the ability also to learn $\alpha_{\bm\theta}$, yielding an intrinsically adaptive CBF that can automatically balance the trade-off between trajectory safety and control performance.

Because we explicitly include a learnable terminal cost term in the objective, a value-based method is leveraged here. In particular, we propose the use of Q-learning \cite{watkins_1989_learning}. Briefly, Q-learning indirectly finds the optimal policy by solving $\min_{\bm\theta} \expval*{ \norm*{\ell(\bm{s}, \bm{a}) + V_{\bm\theta}(\bm{s}_+) - Q_{\bm\theta}(\bm{s},\bm{a}) }^2 }$, where $V_{\bm\theta}$ and $Q_{\bm\theta}$ are defined in \eqref{eq:background:V} and \eqref{eq:background:Q}. The problem can be minimised via, e.g., gradient descent updates
\begin{equation}
    \bm\theta \leftarrow \bm\theta + \eta \delta \nabla_{\bm\theta} Q_{\bm\theta}(\bm{s},\bm{a}),
\end{equation}
with $\eta > 0$ a properly selected learning rate and $\delta$ the temporal difference error. For the computation of $\nabla_{\bm\theta} Q_{\bm\theta}$, while not straightforward, nonlinear sensitivity analysis of the MPC scheme \eqref{eq:methodology:scmpc} shows that this sensitivity coincides with the partial derivative of the Lagrangian w.r.t. $\bm\theta$ at the optimal primal-dual solution \cite{buskens_2001_sensitivity}. Details on the implementation can be found in, e.g., \cite{zanon_2021_safe,airaldi_2023_learning}.

\section{NUMERICAL EXPERIMENT}  \label{sections:numerical-exp}

In this section, we test the proposed methodology on a numerical case. The experiment was implemented in Python 3.12.6 and conducted on a server with 16 AMD EPYC 7252 (3.1 GHz) processors and 252GB RAM. The optimisation problems were formulated with CasADi \cite{andersson_2019_casadi}, and solved via Gurobi \cite{gurobi}. Source code and results are available in the following repository: \url{https://github.com/FilippoAiraldi/mpcrl-cbf}.

\subsection{Problem Description}

Consider the stochastic LTI system
$f(\bm{s}_t,\bm{a}_t,\omega_t) = \bm{A} \bm{s}_t + \bm{B} \bm{a}_t + \bm{E} \omega_t$ with
\begin{equation}
    \bm{A} = \begin{bmatrix} 1 & 0.4 \\ -0.1 & 1 \end{bmatrix}, \
    \bm{B} = \begin{bmatrix} 1 & 0.05 \\ 0.5 & 1 \end{bmatrix}, \
    \bm{E} = \begin{bmatrix} 0.03 \\ 0.01 \end{bmatrix},
\end{equation}
where the disturbances are time-uncorrelated zero-mean normally distributed, i.e., $\expval*{\omega_t} = 0$ and $\expval*{\omega_i \omega_j} \propto \delta(i-j)$. The control space is $\mathcal{A} = \left\{ \bm{a} \in \mathbb{R}^2 : \norm{\bm{a}}_\infty \leq 0.5 \right\}$. The safe set is defined as $\mathcal{C} = \left\{ \bm{s} \in \mathbb{R}^2 : \norm{\bm{s}}_\infty \leq 3 \right\}$, where the infinity-norm state constraint is turned into four separate CBFs $\funcdef{h_j}{\mathbb{R}^2}{\mathbb{R}}$, $j=1,\ldots,4$, i.e., $h_{2i-1}(\bm{s}) = 3 - s_i$ and $h_{2i}(\bm{s}) = 3 + s_i$, $i=1,2$. The RL stage cost includes quadratic terms alongside penalties for the violation of the safety condition:
\begin{equation}  \label{eq:numerical-exp:lti:stage-cost}
    \ell(\bm{s},\bm{a}) = 
    \norm{ \bm{s} }_{\bm{Q}}^2
    + \norm{ \bm{u} }_{\bm{R}}^2
    - c \sum_{j=1}^4 \min \left\{ 0, h_j(\bm{s}) \right\},
\end{equation}
with $\bm{Q} = \bm{I}_{2\times2}$, $\bm{R} = 0.1 \bm{I}_{2\times2}$, and $c = 10^3$. The length of a single episode is set to $T=30$ time steps. 

\subsection{MPC and RL Implementation}

Given the current state $\bm{s}_t$, the following unit-horizon MPC scheme derived from \eqref{eq:background:mpc} and \eqref{eq:methodology:scmpc} is employed as function approximation with $M = 32$ samples\footnote{Although not selected according to \eqref{eq:methodology:num-samples-convex} (which in principle provides a conservative bound on the number of samples required), this $M$ already leads to a sufficiently safe and computationally not-too-expensive policy in our test environment.}:
\begin{mini!}
    { \scriptstyle \bm{u}_0, \Sigma }{ 
        \ell( \bm{s}_t, \bm{u}_0 )
        + \frac{1}{M} \sum_{i=1}^M \left[
            c \sum_{j=1}^4 \sigma_j^{(i)}
            + V_{\bm\theta}^\text{f,pwq}\bigl(\bm{x}_{1}^{(i)}  \bigr)
        \right] 
        \label{eq:numerical-exp:lti:scmpc:obj}
    }{ \label{eq:numerical-exp:lti:scmpc} }{ }
    \addConstraint{ -0.5 \leq \bm{u}_0 \leq 0.5, }
    \addConstraint{ 
        \bm{x}_{1}^{(i)} 
        = f\bigl(\bm{s}_t, \bm{u}_0, \omega_0^{(i)}\bigr),
        \hspace{27pt} i=1,\ldots,M,
    }
    \addConstraint{ 
        h_j\bigl( \bm{x}_1^{(i)} \bigr) 
        - \left( 1 - \gamma_{\bm\theta,j} \right) h_j(\bm{s}_t)
        + \sigma_j^{(i)}
        \ge 0,
        \nonumber 
    }
    \addConstraint{ 
        \hspace{63pt} j=1,\dots,4, \ i=1,\ldots,M, 
        \label{eq:numerical-exp:lti:scmpc:con:cbf}
    }
    \addConstraint{ \sigma_j^{(i)} \geq 0, \hspace{24.75pt} j=1,\dots,4, \ i=1,\ldots,M. }
\end{mini!}
As terminal cost approximation in \eqref{eq:numerical-exp:lti:scmpc:obj}, the convex PWQ function $V_{\bm\theta}^\text{f,pwq}$ is employed with a hidden size of 16 neurons. For each CBF constraint \eqref{eq:numerical-exp:lti:scmpc:con:cbf}, the corresponding class $\mathcal{K}$ function is parametrised linearly, i.e., $\alpha_{\bm\theta,j}(y) = \gamma_{\bm\theta,j} y$, $j=1,\ldots,4$, where $\gamma_{\bm\theta,j} \in [0, 1]$ is an adjustable scalar value. The whole MPC learnable parametrisation is therefore
\begin{equation}
    \bm\theta = \left( \bm{W}_{\bm\theta}, \bm{b}_{\bm\theta}, \bm{w}_{\bm\theta}, \gamma_{\bm\theta,1}, \dots, \gamma_{\bm\theta,4} \right),
\end{equation}
where $\bm{W}_{\bm\theta}$, $ \bm{b}_{\bm\theta}$ and $ \bm{w}_{\bm\theta}$ are defined per \Cref{sections:background:cost-to-go:pwqnn}. Note that the CBF constraints \eqref{eq:numerical-exp:lti:scmpc:con:cbf} have been relaxed via slack variables $\Sigma = \bigl\{ \sigma_j^{(i)}, i=1,\ldots,M, j=1,\ldots,4 \bigr\}$ to preserve feasibility (see \Cref{assumptions:background:recursive-feas}), and, since the problem is convex, we set $\zeta = 0$.

The MPC parametrisation is initialised to uniformly random values for the PWQ NN, and to $\gamma_{\bm\theta,j} = 0.7$ for all the CBF parameters. A Q-learning agent is trained for 1000 episodes with a learning rate of $0.005$ via \textit{rmsprop} \cite{tieleman_2012_rmsprop}. The parametrisation is updated at the end of each episode, based on the experiences observed in the last episode. Since $\gamma_{\bm\theta,j}$ and part of the parametrisation of the PWQ NN must be constrained, a constrained step update of $\bm\theta$ is performed \cite{airaldi_2023_learning}. To induce exploration in an epsilon-greedy fashion, a term $\bm{q}^\top \bm{u}_0$ is added to the objective \eqref{eq:numerical-exp:lti:scmpc:obj}, where $\bm{q} \sim \mathcal{N}\left(\bm{0}_{2 \times 2}, \rho_{\bm{q}} \bm{I}_{2 \times 2}\right)$. The exploration scale $\rho_{\bm{q}}$ and probability both start at 1 but decay by a factor of 0.997 after each episode. Lastly, the training procedure is repeated for 10 differently randomly seeded agents to account for randomness.

\subsection{Results}

Numerical results corroborate the capability of the proposed framework in appropriately learning the terminal cost MPC component via RL, as well as the effectiveness of the learned policy compared to a full-length horizon MPC controller. \Cref{fig:lti:nrmse-and-r2} shows the evolution of the PWQ approximation w.r.t. the explicit optimal solution, computed in accordance to \cite{lim_1996_separation}. Convergence of both the normalised error and the $R^2$ coefficient during training provides empirical evidence that the Q-learning algorithm is able to steer the $V_{\bm\theta}^\text{f}$ term towards the real optimal one. 
\begin{figure}
    \centering
    \begin{tikzpicture}
    \pgfplotstableread{data/lti/mpcrl_nrmse_and_r2.dat}\data

    \begin{axis}[
        name=plot-nrmse,
        xlabel=Learning Episode,
        ylabel={\textcolor{C0}{NRMSE}},
        height=\axisdefaultheight,
        width=0.85\axisdefaultwidth,
        scaled y ticks=real:0.1,
    ]
        \addplot [C0] table [x=episode, y=nrmse-avg] {\data};
        \addplot [C0, ultra thin, opacity=0.5, name path=nrmse-lb] table [x=episode, y expr={\thisrow{nrmse-avg}-\thisrow{nrmse-std}}] {\data};
        \addplot [C0, ultra thin, opacity=0.5, name path=nrmse-ub] table [x=episode, y expr={\thisrow{nrmse-avg}+\thisrow{nrmse-std}}] {\data};
        \addplot [C0, opacity=0.2] fill between [of=nrmse-lb and nrmse-ub, reverse=true];
    \end{axis}

    \begin{axis}[
        name=plot-r2,
        axis x line=none,
        axis y line*=right,
        axis background/.style={draw=none},
        axis line style={draw=none},
        ylabel={\textcolor{C1}{$R^2$}},
        height=\axisdefaultheight,
        width=0.85\axisdefaultwidth,
    ]
        \addplot [C1] table [x=episode, y=r2-avg] {\data};
        \addplot [C1, ultra thin, opacity=0.5, name path=r2-lb] table [x=episode, y expr={\thisrow{r2-avg}-\thisrow{r2-std}}] {\data};
        \addplot [C1, ultra thin, opacity=0.5, name path=r2-ub] table [x=episode, y expr={\thisrow{r2-avg}+\thisrow{r2-std}}] {\data};
        \addplot [C1, opacity=0.2] fill between [of=r2-lb and r2-ub, reverse=true];
    \end{axis}
\end{tikzpicture}
    \caption{Evolution of the learned terminal cost approximation in terms of normalised RMSE and coefficient of determination w.r.t. the optimal cost-to-go function for the constrained stochastic LTI experiment. Average results $\pm$ one standard deviation over 10 different seeds are reported.}
    \label{fig:lti:nrmse-and-r2}
\end{figure}
\Cref{fig:lti:gamma-evolution} reports the evolution of the CBF parameters $\gamma_{\bm\theta,1}$ and $\gamma_{\bm\theta,3}$, which correspond to the lower and upper bounds on the first state. These are of more interest because, due to the dynamics, most of the violations tend to occur in these two constraints (the other two parameters $\gamma_{\bm\theta,2}$ and $\gamma_{\bm\theta,4}$ are omitted as they do not change as much during learning). It is important to stress again that these CBF parameters (as well as all other parameters included in $\bm\theta$) are adjusted by Q-learning to enhance closed-loop performance. Because constraint violations are included in the cost \eqref{eq:numerical-exp:lti:stage-cost} as penalty term, safety is only indirectly taken into account by the RL algorithm. Nonetheless, since the parameters $\gamma_{\bm\theta,j}$ are constrained to the interval $[0,1]$ in each update, the CBFs remain valid throughout the learning process. As a matter of fact, during training, the MPC-based RL policy achieves a small empirical probability ($0.0942 \pm 0.00483 \%$) of violating any constraint. 
\begin{figure}
    \centering
    \begin{tikzpicture}
    \pgfplotstableread{data/lti/mpcrl_gamma.dat}\data

    \begin{axis}[
        name=plot-gamma,
        xlabel=Learning Episode,
        ylabel=$\gamma_{\bm\theta}$,
        legend pos=south east,
    ]
        \addplot [C0] table [x=episode, y=gamma0-avg] {\data};
        \addplot [
            C0, ultra thin, opacity=0.5, name path=gamma0-lb
        ] table [x=episode, y expr={\thisrow{gamma0-avg}-\thisrow{gamma0-std}}] {\data};
        \addplot [
            C0, ultra thin, opacity=0.5, name path=gamma0-ub
        ] table [x=episode, y expr={\thisrow{gamma0-avg}+\thisrow{gamma0-std}}] {\data};
        \addplot [C0, opacity=0.2] fill between [of=gamma0-lb and gamma0-ub, reverse=true];

        \addplot [C1] table [x=episode, y=gamma2-avg] {\data};
        \addplot [
            C1, ultra thin, opacity=0.5, name path=gamma2-lb
        ] table [x=episode, y expr={\thisrow{gamma2-avg}-\thisrow{gamma2-std}}] {\data};
        \addplot [
            C1, ultra thin, opacity=0.5, name path=gamma2-ub
        ] table [x=episode, y expr={\thisrow{gamma2-avg}+\thisrow{gamma2-std}}] {\data};
        \addplot [C1, opacity=0.2] fill between [of=gamma2-lb and gamma2-ub, reverse=true];

        \legend{$\gamma_{\bm\theta,1}$,,,,$\gamma_{\bm\theta,3}$}
    \end{axis}
\end{tikzpicture}
    \caption{Evolution of two of the linear class $\mathcal{K}$ function learnable coefficients. Average results $\pm$ one standard deviation over 10 different seeds are reported.}
    \label{fig:lti:gamma-evolution}
\end{figure}

After the training phase, the learned MPC-based RL policy is evaluated against a full-length horizon stochastic MPC policy. The latter is similar to \eqref{eq:numerical-exp:lti:scmpc} but is fixed (i.e., it contains no learnable terms) and has a horizon of 12 (instead of 1), which was found to be sufficient to achieve the lowest closed-loop cost (see, e.g., \cite{chmielewski_1996_constrained}, for a more thorough discussion on how to find such a horizon). The other hyperparameters, e.g., the number of samples $M$, are the same in both policies. For each evaluation episode, the initial conditions are drawn from the boundary of the maximal invariant set. \Cref{fig:lti:returns-vs-solvertimes} shows the outcomes of this evaluation comparison. Unsurprisingly, CPU time spent online in solving the MPC-RL policy is almost two orders of magnitude shorter than that for the fixed MPC controller, thanks to the corresponding optimisation problem being considerably smaller. However, from the point of view of costs, both policies achieve remarkably similar closed-loop performance despite the difference in horizon lengths. This finding is further validated in \Cref{fig:lti:state-trajectories}, which reports ten state trajectories that highlight how both control policies behave rather similarly. Moreover, both policies exhibit comparable empirical constraint violation probabilities at evaluation ($0.0839 \pm 0.0138\%$ and $0.1 \pm 0.014 \%$, respectively). These probabilities are also in line with the constraint violation probability recorded during training.
\begin{figure}
    \centering
    \begin{tikzpicture}
    \pgfplotstableread{data/lti/mpc_returns.dat}\MpcReturns
    \pgfplotstableread{data/lti/mpc_returns_quartiles.dat}\MpcReturnsQuartiles
    \pgfplotstableread{data/lti/mpc_solvertimes.dat}\MpcSolverTimes
    \pgfplotstableread{data/lti/mpc_solvertimes_quartiles.dat}\MpcSolverTimesQuartiles
    
    \pgfplotstableread{data/lti/mpcrl_returns.dat}\MpcRlReturns
    \pgfplotstableread{data/lti/mpcrl_returns_quartiles.dat}\MpcRlReturnsQuartiles
    \pgfplotstableread{data/lti/mpcrl_solvertimes.dat}\MpcRlSolverTimes
    \pgfplotstableread{data/lti/mpcrl_solvertimes_quartiles.dat}\MpcRlSolverTimesQuartiles
    
    \def\gap{0.035}  % distance between close-by violins
    \def\distance{2.05 + 2*\gap}  % distance between same-side violins
    \def\marksize{1.0}
    
    \begin{axis}[
        name=plot-returns,
        xmajorgrids=false,
        xmin=-1.1-\gap, xmax=\distance+\gap+1.1,
        xtick={0,\distance},
        xticklabels={MPC, MPC-based RL},
        ylabel={\textcolor{C0}{$J(\pi)$}},
        ytick distance=30,
        height=\axisdefaultheight,
        width=0.83\axisdefaultwidth,
    ]
        \addplot [
            C0, 
            only marks,
            mark=*,
            mark options={draw=white},
            mark size=\marksize,
        ] table [x expr=-\gap, y=returns] {\MpcReturns};
        \addplot [
            C0, 
            ultra thin, 
            opacity=0.5, 
            name path=mpc-returns-center
        ] table [x expr=-\gap, y=kde-point] {\MpcReturns};
        \addplot [
            C0, 
            ultra thin, 
            opacity=0.5, 
            name path=mpc-returns-left
        ] table [x expr=-\gap-\thisrow{kde-pdf}, y=kde-point] {\MpcReturns};
        \addplot [C0, opacity=0.2] fill between [of=mpc-returns-center and mpc-returns-left, reverse=true];
        \pgfplotsinvokeforeach{0,1,2}{
            \pgfplotstablegetelem{#1}{quartile-point}\of\MpcReturnsQuartiles
            \let\x=\pgfplotsretval
            \pgfplotstablegetelem{#1}{quartile-pdf}\of\MpcReturnsQuartiles
            \let\y=\pgfplotsretval
            \ifnum#1=1
                \expanded{\noexpand\draw [C0] (axis cs:-\gap-\y,\x) -- (axis cs:-\gap,\x);}
            \else
                \expanded{\noexpand\draw [C0, densely dashed] (axis cs:-\gap-\y,\x) -- (axis cs:-\gap,\x);}
            \fi
        };

        \addplot [
            C0, 
            only marks,
            mark=*,
            mark options={draw=white},
            mark size=\marksize,
        ] table [x expr=\distance-\gap, y=returns] {\MpcRlReturns};
        \addplot [
            C0, 
            ultra thin, 
            opacity=0.5, 
            name path=mpcrl-returns-center
        ] table [x expr=\distance-\gap, y=kde-point] {\MpcRlReturns};
        \addplot [
            C0, 
            ultra thin, 
            opacity=0.5, 
            name path=mpcrl-returns-left
        ] table [x expr=\distance-\gap-\thisrow{kde-pdf}, y=kde-point] {\MpcRlReturns};
        \addplot [C0, opacity=0.2] fill between [of=mpcrl-returns-center and mpcrl-returns-left, reverse=true];
        \pgfplotsinvokeforeach{0,1,2}{
            \pgfplotstablegetelem{#1}{quartile-point}\of\MpcRlReturnsQuartiles
            \let\x=\pgfplotsretval
            \pgfplotstablegetelem{#1}{quartile-pdf}\of\MpcRlReturnsQuartiles
            \let\y=\pgfplotsretval
            \ifnum#1=1
                \expanded{\noexpand\draw [C0] (axis cs:\distance-\gap-\y,\x) -- (axis cs:\distance-\gap,\x);}
            \else
                \expanded{\noexpand\draw [C0, densely dashed] (axis cs:\distance-\gap-\y,\x) -- (axis cs:\distance-\gap,\x);}
            \fi
        };
    \end{axis}

    \begin{semilogyaxis}[
        name=plot-solvertimes,
        axis background/.style={draw=none},
        axis line style={draw=none},
        axis x line=none,
        xmin=-1.1-\gap, xmax=\distance+\gap+1.1,
        axis y line*=right,
        ytick={1e-4,1e-3,1e-2,1e-1,1e0},
        ylabel={\textcolor{C1}{Avg. CPU Time (s)}},
        ymax=0.6,
        height=\axisdefaultheight,
        width=0.83\axisdefaultwidth,
    ]
        \addplot [
            C1, 
            only marks,
            mark=*,
            mark options={draw=white},
            mark size=\marksize,
        ] table [x expr=\gap, y=solvertimes] {\MpcSolverTimes};
        \addplot [
            C1, 
            ultra thin, 
            opacity=0.5, 
            name path=mpc-solvertimes-center
        ] table [x expr=\gap, y=kde-point] {\MpcSolverTimes};
        \addplot [
            C1, 
            ultra thin, 
            opacity=0.5, 
            name path=mpc-solvertimes-left
        ] table [x expr=\gap+\thisrow{kde-pdf}, y=kde-point] {\MpcSolverTimes};
        \addplot [C1, opacity=0.2] fill between [of=mpc-solvertimes-center and mpc-solvertimes-left, reverse=true];
        \pgfplotsinvokeforeach{0,1,2}{
            \pgfplotstablegetelem{#1}{quartile-point}\of\MpcSolverTimesQuartiles
            \let\x=\pgfplotsretval
            \pgfplotstablegetelem{#1}{quartile-pdf}\of\MpcSolverTimesQuartiles
            \let\y=\pgfplotsretval
            \ifnum#1=1
                \expanded{\noexpand\draw [C1] (axis cs:\gap+\y,\x) -- (axis cs:\gap,\x);}
            \else
                \expanded{\noexpand\draw [C1, densely dashed] (axis cs:\gap+\y,\x) -- (axis cs:\gap,\x);}
            \fi
        };

        \addplot [
            C1, 
            only marks,
            mark=*,
            mark options={draw=white},
            mark size=\marksize,
        ] table [x expr=\distance+\gap, y=solvertimes] {\MpcRlSolverTimes};
        \addplot [
            C1, 
            ultra thin, 
            opacity=0.5, 
            name path=mpcrl-solvertimes-center
        ] table [x expr=\distance+\gap, y=kde-point] {\MpcRlSolverTimes};
        \addplot [
            C1, 
            ultra thin, 
            opacity=0.5, 
            name path=mpcrl-solvertimes-left
        ] table [x expr=\distance+\gap+\thisrow{kde-pdf}, y=kde-point] {\MpcRlSolverTimes};
        \addplot [C1, opacity=0.2] fill between [of=mpcrl-solvertimes-center and mpcrl-solvertimes-left, reverse=true];
        \pgfplotsinvokeforeach{0,1,2}{
            \pgfplotstablegetelem{#1}{quartile-point}\of\MpcRlSolverTimesQuartiles
            \let\x=\pgfplotsretval
            \pgfplotstablegetelem{#1}{quartile-pdf}\of\MpcRlSolverTimesQuartiles
            \let\y=\pgfplotsretval
            \ifnum#1=1
                \expanded{\noexpand\draw [C1] (axis cs:\distance+\gap+\y,\x) -- (axis cs:\distance+\gap,\x);}
            \else
                \expanded{\noexpand\draw [C1, densely dashed] (axis cs:\distance+\gap+\y,\x) -- (axis cs:\distance+\gap,\x);}
            \fi
        };
    \end{semilogyaxis}
\end{tikzpicture}
    \caption{Comparison between the non-learning MPC policy (horizon of 12) and the learned MPC-based RL policy (unit horizon) in terms of the total incurred cost and average solver time over different 1000 episode trials. Lines represent the second (solid) and first and third (dashed) quartiles.}
    \label{fig:lti:returns-vs-solvertimes}
\end{figure}
\begin{figure}
    \centering
    \input{imgs/lti/state_trajectories}
    \caption{Ten examples of state trajectories recorded during the evaluation of the non-learning MPC policy (horizon of 12) against the learned MPC-based RL policy (unit horizon). Also reported is the maximal invariant set.}
    \label{fig:lti:state-trajectories}
\end{figure}

\section{CONCLUSIONS}  \label{sections:conclusions}

We have proposed a control methodology for stochastic safety-critical systems that merges MPC, CBFs and RL. The parametric MPC controller acts as the backbone, providing the control policy and value function approximation for the RL task. A probabilistic CBF formulation, integrated in the MPC scheme, is put in place to ensure safety of state trajectories with arbitrary probability. To retain tractability of the optimisation problem, the MPC horizon is (substantially) shrunk and a learnable terminal cost is introduced to combat performance drops. RL is then used to adjust the parametrisation of this learnable cost as well as the class $\mathcal{K}$ function, automatically tuning the MPC parametrisation to achieve higher closed-loop performance. A numerical example on a constrained LTI environment showcases the proposed method. Future work will investigate the use of more complex CBF parametrisations (e.g., neural network-based), as well as applications of the proposed methodology to nonlinear systems.

% \IEEEtriggeratref{5}
\bibliographystyle{IEEEtran}
\bibliography{references}

\end{document}